\documentclass[a4paper,UKenglish,cleveref, autoref, thm-restate]{lipics-v2021}

\pdfoutput=1 
\hideLIPIcs  


\definecolor{keywordcolor}{rgb}{0.7, 0.1, 0.1}   
\definecolor{tacticcolor}{rgb}{0.0, 0.1, 0.3}    
\definecolor{commentcolor}{rgb}{0.4, 0.4, 0.4}   
\definecolor{stringcolor}{rgb}{0.5, 0.3, 0.2}    
\definecolor{symbolcolor}{rgb}{0.1, 0.2, 0.7}    
\definecolor{sortcolor}{rgb}{0.1, 0.5, 0.1}      
\definecolor{attributecolor}{rgb}{0.7, 0.1, 0.1} 
\definecolor{errorcolor}{rgb}{1, 0, 0}           

\lstloadlanguages{lean}
\lstset{language=lean}
\usepackage{scalefnt}
\usepackage{upgreek}
\usepackage{xspace}
\usepackage{fontawesome}
\usepackage{tikz}

\DeclareMathOperator{\round}{round}
\DeclareMathOperator{\Prop}{Prop}
\DeclareMathOperator{\Th}{Th}
\DeclareMathOperator{\denom}{denom}

\newcommand{\N}{\mathbb{N}}
\newcommand{\Z}{\mathbb{Z}}
\newcommand{\Q}{\mathbb{Q}}
\newcommand{\R}{\mathbb{R}}
\newcommand{\Circ}{\mathbb{S}}
\newcommand{\mathlib}{\texttt{mathlib}\xspace}
\newcommand{\link}{\ensuremath{{}^\text{\faExternalLink}}}
\newcommand{\mllink}[1]{\href{https://github.com/leanprover-community/mathlib/blob/9956c3806d0f9553e5c6e6af68970563a1619cd1/src/#1}{\link}}

\bibliographystyle{plainurl}

\title{A formalisation of Gallagher's ergodic theorem} 

\author{Oliver Nash}{Imperial College, London, United Kingdom \and \url{http://olivernash.org}}{o.nash@imperial.ac.uk}{https://orcid.org/0000-0001-7208-6307}{} 

\authorrunning{O. Nash} 

\Copyright{Oliver Nash} 

\ccsdesc[500]{Mathematics of computing~Probability and statistics} 

\keywords{Lean proof assistant, measure theory, metric number theory, ergodicity,
  Gallagher's theorem, Duffin-Schaeffer conjecture}

\category{} 

\relatedversion{} 

\supplement{The formalisation is available in the master branch of Lean's Mathematical Library, \mathlib.}
\supplementdetails{Software (Source Code)}{https://github.com/leanprover-community/mathlib}

\acknowledgements{It is a pleasure to thank Andrew Pollington who suggested this project
during the conference Lean for the Curious Mathematician, held at Brown University (ICERM) in 2022.
I also wish to thank Anatole Dedecker, Heather Macbeth, Patrick Massot, and Junyan Xu, all of whom
were of direct assistance. Lastly I especially wish to thank Sébastien Gouëzel for many helpful
suggestions and Kevin Buzzard for many useful conversations.}

\nolinenumbers 

\EventEditors{John Q. Open and Joan R. Access}
\EventNoEds{2}
\EventLongTitle{42nd Conference on Very Important Topics (CVIT 2016)}
\EventShortTitle{CVIT 2016}
\EventAcronym{CVIT}
\EventYear{2016}
\EventDate{December 24--27, 2016}
\EventLocation{Little Whinging, United Kingdom}
\EventLogo{}
\SeriesVolume{42}
\ArticleNo{23}

\begin{document}

\maketitle

\begin{abstract}
Gallagher's ergodic theorem is a result in metric number theory. It states that the approximation
of real numbers by rational numbers obeys a striking {\lq}all or nothing{\rq} behaviour. We discuss
a formalisation of this result in the Lean theorem prover. As well as being notable in its own
right, the result is a key preliminary, required for Koukoulopoulos and Maynard's stunning recent
proof of the Duffin-Schaeffer conjecture.
\end{abstract}

\section{Introduction}
\label{sec:Introduction}
In addition to recognising extraordinary achievements of young mathematicians, the Fields Medal
provides a valuable service to the wider mathematical community: it draws attention to important
recent results. In recent years, such attention has had a significant positive impact in the
formalisation community. Buzzard, Commelin, and Massot's formalisation of the definition of a
perfectoid space \cite{BuzzardCommelinMassot20} and Commelin, Topaz et al.'s spectacular
success with the Liquid Tensor Experiment \cite{CommelinTopaz2022} were both
the results of projects which formalised work of 2018 Fields Medalist Peter Scholze. Amongst other
things, these projects demonstrate that today's proof assistants are capable of handling the
complicated constructions of contemporary mathematics.

In 2022, James Maynard was awarded a Fields Medal with a citation that highlighted his work on the
structure of prime numbers as well as on \emph{Diophantine approximation}. In the long form of the
citation we read that:
\begin{quote}
  Maynard has also produced fundamental work in Diophantine approximation, having solved
  the Duffin-Schaeffer conjecture with Koukoulopoulos.
\end{quote}
Shortly after the announcement of Maynard's award, Andrew Pollington suggested to the author that a
formalisation of Koukoulopoulos and Maynard's proof of the Duffin-Schaeffer conjecture would be a
worthy target for formalisation. Recognising that this would be an enormous undertaking, he
suggested focusing on various necessary preliminaries. Perhaps the most important of these is
Gallagher's ergodic theorem \cite{Gallagher1961} (see also \cite{KoukoulopoulosMaynard2020}
lemma 5.1).
The statement is as follows:
\begin{theorem}[Gallagher's theorem]\label{thm:gallagher_reals}
  Let $\delta_1, \delta_2, \ldots$ be a sequence of real numbers and let:
  \begin{align*}
    W = \{ x \in \R ~|~ \exists~ q \in \Q, |x - q| < \delta_{\denom(q)} ~i.o.\}.
  \end{align*}
  Then $W$ is almost equal to either $\emptyset$ or $\R$.
\end{theorem}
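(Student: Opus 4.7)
The plan is to reduce Gallagher's theorem to a zero-one dichotomy on the unit circle $\Circ = \R/\Z$ and then obtain the dichotomy via Fourier analysis. First I would note that $W$ is invariant under translation by $\Z$, since $q \mapsto q + n$ preserves $\denom$ on $\Q$; hence $W$ descends to a measurable subset $\tilde W \subseteq \Circ$, and it suffices to show that $\tilde W$ has Haar measure zero or one.

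Writing each rational in lowest terms, I would express $\tilde W$, up to a routine null-set argument for rationals approximable only by bounded-denominator fractions, as a limsup
\[
\tilde W = \limsup_{n \to \infty} E_n, \qquad E_n = \bigcup_{\substack{0 \le p < n \\ \gcd(p,n) = 1}} B(p/n, \delta_n) \subseteq \Circ.
\]
The key step is then to prove that $\tilde W$ is invariant, modulo null sets, under rotation by every rational $r \in \Circ$. Since the individual sets $E_n$ are not invariant under any nontrivial rational rotation, one must work at the level of the limsup: I would control the symmetric differences $E_n \triangle (E_n + r)$ by exploiting the structure of coprime residues under shifts, and use Borel-Cantelli-type estimates to conclude that $(\tilde W + r) \triangle \tilde W$ has measure zero.

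Once this rational-rotation invariance is in hand, the conclusion follows by a standard Fourier argument. If $\chi_{\tilde W} \in L^2(\Circ)$ is invariant mod null under rotation by every rational, then its Fourier coefficients $\widehat{\chi_{\tilde W}}(k)$ satisfy $(1 - e^{-2\pi i k r})\,\widehat{\chi_{\tilde W}}(k) = 0$ for every $k \in \Z$ and $r \in \Q$; for each $k \ne 0$ the choice $r = 1/(2k)$ gives $kr \notin \Z$, forcing $\widehat{\chi_{\tilde W}}(k) = 0$. Hence $\chi_{\tilde W}$ is essentially constant, and since it takes values in $\{0, 1\}$ it is a.e.\ equal to $0$ or a.e.\ equal to $1$, giving the dichotomy for $\tilde W$.

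I expect the rational-rotation invariance of $\tilde W$ to be the main obstacle, since it intertwines the number-theoretic structure of reduced fractions with the measure-theoretic limsup operation. Formalising this step will likely demand a careful interaction between the $\denom$ function on $\Q$, the arcs $B(p/n, \delta_n)$, and the limsup, bootstrapping the asymptotic structure of the individual $E_n$'s into an invariance statement for the full limsup set.
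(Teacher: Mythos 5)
Your reduction to the circle and the concluding Fourier argument (rational-rotation invariance of a set implies it is a.e.\ trivial) are both sound, and the latter is a reasonable alternative to the paper's density-theorem argument for its Lemma~\ref{lem:circle_seq_ergodic}. The gap is in the middle: you assert that $\tilde W$ is invariant mod null under every rational rotation, and propose to prove it by estimating $\mu(E_n \triangle (E_n + r))$ and running a Borel--Cantelli argument, but this does not work. If $r$ has order $m$ coprime to $n$, then $E_n + r$ is contained not near $E_n$ but inside the set of points of approximate order $nm$ within distance $\delta_n$; it is disjoint from the order-$n$ arcs. So $\mu(E_n \triangle (E_n + r))$ is comparable to $\mu(E_n) + \mu(E_n + r)$, which is not summable in the interesting regime $\sum \varphi(n)\delta_n = \infty$. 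The more serious mismatch is that $E_n + r$ lives at radius $\delta_n$ while the arcs that actually appear in the limsup at index $nm$ have radius $\delta_{nm}$, and nothing relates these two numbers. Rational-rotation invariance of $W$ is true, but it is a \emph{consequence} of Gallagher's theorem, not an accessible intermediate step.

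The paper's proof circumvents the issue entirely. Rather than attempting a global rotation-invariance statement, it fixes a prime $p$ and decomposes $W = A_p \cup B_p \cup C_p$ according to whether the order $n$ is coprime to $p$, exactly divisible by $p$, or divisible by $p^2$. The pieces $A_p$ and $B_p$ are almost invariant under the multiplicative ergodic maps $y \mapsto py$ and $y \mapsto py + [1/p]$; the key point is that multiplying by $p$ scales the thickening radius by $p$, and Cassels's lemma (Lemma~\ref{lem:cassels}, which is where the hypothesis $\delta_n \to 0$ is used and which your plan omits) shows the limsup is insensitive to such a constant rescaling. The piece $C_p$ is \emph{genuinely} invariant under rotation by $[1/p]$, with no null-set fudge, because $p^2 \mid n$ means adding a $p$-torsion point does not change the order. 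Only in the residual case where all $A_p$ and $B_p$ are null does rotation-invariance of $W$ appear, and there it is inherited exactly from $C_p$. Without some substitute for this decomposition and for Cassels's lemma, your middle step has no route to completion.
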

This deserves a few remarks:
\begin{itemize}
  \item The notation $\denom(q)$ denotes the denominator of $q$ (in lowest terms). It is a strictly
  positive natural number.
  \item Special attention should be paid to the letters {\lq}i.o.{\rq} appearing in the
  definition of $W$: these abbreviate the phrase {\lq}infinitely often{\rq}. The notation means
  that $x \in W$ iff there exists an \emph{infinite sequence} of rationals
  $q_0, q_1, \ldots$ (which may depend on $x$) with $\denom(q_0) < \denom(q_1) < \cdots$ satisfying
  $|x - q_i| < \delta_{\denom(q_i)}$ for all $i$.
  \item The phrase {\lq}almost equal{\rq} characterises this as a theorem of \emph{metric number
  theory}: it means that the sets are equal up to a set of Lebesgue measure zero.
  \item It is striking and not at all obvious that $W$ should exhibit such dichotomous behaviour.
\end{itemize}

It is the purpose of this article to discuss the author's formalisation of Gallagher's theorem.
It was carried out using the Lean proof assistant together with its \mathlib
library \cite{Mathlib}. More
precisely we formalised the following:
\begin{lstlisting}[caption={Gallagher's theorem \mllink{number_theory/well_approximable.lean\#L191}},label=lst:gallagher,captionpos=t]
theorem add_well_approximable_ae_empty_or_univ
  (δ : ℕ → ℝ) (hδ : tendsto δ at_top (𝓝 0)) :
  (∀ᵐ x, ¬ add_well_approximable 𝕊 δ x) ∨ ∀ᵐ x, add_well_approximable 𝕊 δ x :=
\end{lstlisting}
The notation will be explained in the pages to come. With further work, one could drop the
hypothesis $h\delta$, which says that $\delta_n \to 0$ as $n \to \infty$. In fact this highlights
a curious feature of the proof: one makes
with two totally separate arguments, a measure-theoretic argument which assumes the hypothesis
$h\delta$ and a number-theoretic argument assuming its negation. One then invokes
the law of excluded middle to deduce the result unconditionally. The measure-theoretic argument
assuming $h\delta$ is much harder and is what we have formalised.

The paper is intended for non-experts and the structure is as follows. In
section~\ref{sec:basic_concepts} we outline relevant
basic concepts so that we can reinterpret Gallagher's theorem as a result about the $\limsup$ of
thickenings of finite-order points in the circle. We also make some general remarks about
metric number theory. In section~\ref{sec:density_thm} we introduce the most important foundational
result required. Lebesgue's density theorem is the workhorse of Gallagher's proof. In
section~\ref{sec:cassels_lemma} we discuss a key measure-theoretic lemma due to Cassels which is of
some independent interest in its own right. In section~\ref{sec:ergodic_theory} we discuss the
results about ergodic maps which we needed, emphasising the ergodicity of certain maps of the
circle. In section~\ref{sec:gallagher_thm} we introduce points of approximately finite order and
use this language to give a proof of Gallagher's theorem. We finish with section~\ref{sec:final_words}
where, amongst other things, we discuss further directions this work could be
taken. For the most part we do not enter into the details of proofs. The main exception to this
is the proof of Gallagher's theorem itself since we hope our presentation of Gallagher's ideas
may make them more accessible than other accounts intended for specialists
(such as \cite{Gallagher1961} Theorem 1 or \cite{MR1672558} Theorem 2.7(B)).

In keeping with \mathlib's stress on mathematical unity, all work
was added directly to \mathlib's master branch in a series of 27 pull requests, collectively adding
just over 3,500 net new lines of code. This work is thus automatically
available to all future \mathlib users. Throughout this text we also provide permalinks to
relevant locations in \mathlib; each one is indicated with the symbol \link. We also provide a
judiciously chosen set of code listings (such as listing \ref{lst:gallagher} above) containing Lean
code. Often our intention is to assist the reader who wishes to compare a key informal statement
with its formal equivalent.

\section{Basic concepts}\label{sec:basic_concepts}
We outline some basic concepts to fix notation and to assist non-experts.
\subsection{Almost equal sets}
Given a measurable space with measure $\mu$, when there is no possibility of ambiguity about
the measure, we shall use the notation:
\begin{align}\label{def:ae_notation}
  s =_{a.e.} t,
\end{align}
to say two subsets $s$, $t$ are almost equal with respect to $\mu$. We recall that this is
equivalent to the following pair of measure-zero conditions
\mllink{measure_theory/measure/measure_space_def.lean\#L385}:
\begin{align}\label{ae_iff_measure_zero_pair}
  \mu(s \setminus t) = 0 \mbox{\quad and\quad } \mu(t \setminus s) = 0.
\end{align}

\subsection{Obeying a condition infinitely often}\label{subsect:inf_often}
Gallagher's theorem concerns a set of points obeying a condition {\lq}infinitely often{\rq}. In
general, given a sequence of subsets $s_0, s_1, \ldots$ of some background type $X$ the notation
$\exists ~\cdots~ i.o.$ is defined as\footnote{This is standard notation appearing throughout
the informal literature.}:
\begin{align}\label{def:inf_often}
  \{ x : X ~|~ \exists~ n \in \N, x \in s_n ~i.o.\} =
  \{ x : X ~|~ \mbox{the set $\{n \in \N ~|~ x \in s_n \}$ is infinite} \}.
\end{align}
In fact there is another expression for this set; it is easy to see that
\mllink{order/liminf_limsup.lean\#L860}:
\begin{align}\label{inf_often_eq_limsup}
  \{ x : X ~|~ \exists~ n \in \N, x \in s_n ~i.o.\} = \limsup s,
\end{align}
where:
\begin{align*}
  \limsup s = \bigcap_{n \ge 0} \bigcup_{i \ge n} s_i.
\end{align*}
When formalising a result about the set of points belonging to some family of subsets infinitely
often, one can thus phrase it
in language of \eqref{def:inf_often} or in the language of $\limsup$. We opted for the latter.
This was preferable because $\limsup$ makes sense for any complete lattice whereas
\eqref{def:inf_often} is specific to the lattice of subsets of a type. All API developed was
thus more widely applicable.

In the course of the proof it is useful to work with the $\limsup$ bounded by a predicate
$p : \N \to \Prop$. This can be defined:
\begin{align}\label{def:blimsup}
  \limsup_p s = \bigcap_{n \ge 0} \bigcup_{p(i), i \ge n} s_i.
\end{align}
The actual definition which we added
\texttt{filter.blimsup} \mllink{order/liminf_limsup.lean\#L285}
is slightly different so that it also applies in a conditionally complete lattice (such as
$\R$) but we provided a lemma showing the equivalence to \eqref{def:blimsup} for
complete lattices (such as \texttt{set $\R$}) \mllink{order/liminf_limsup.lean\#L578}. Using
\texttt{blimsup}, we can work with the $\limsup$ of the subfamily defined by the predicate $p$
without having to pass to the subtype of the indexing type. For example given two predicates $p$,
$q$, we can express the useful identity:
\begin{align}\label{blimsup_or}
  \limsup_{p \lor q} s = \limsup_p s \cup \limsup_q s,
\end{align}
formally as:
\begin{lstlisting}[caption={$\limsup$ bounded by the logical or of two predicates \mllink{order/liminf_limsup.lean\#L755}},label=lem:blimsup_or_eq_sup,captionpos=t]
@[simp] lemma blimsup_or_eq_sup :
  blimsup u f (λ x, p x ∨ q x) = blimsup u f p ⊔ blimsup u f q :=
\end{lstlisting}
without involving any subtypes. This is a standard design pattern used throughout \mathlib.

\subsection{Thickenings}
Using the language introduced above, the set $W$ appearing in the statement of
theorem \ref{thm:gallagher_reals} may be defined as:
\begin{align*}
  W = \limsup_{n > 0} s,
\end{align*}
where:
\begin{align*}
  s_n = \{ x \in \R ~|~ \exists~ q \in \Q, \denom(q) = n, |x - q| < \delta_n \}.
\end{align*}
These subsets $s_n$ have a special form: they are \emph{thickenings}. In general, given a metric
space $X$, if $s \subseteq X$ and $\delta \in \R$, the (open) $\delta$-thickening of $s$ is:
\begin{align*}
  \Th (\delta, s) = \{ x \in X ~|~ \exists~ y \in s, d(x, y) < \delta \}.
\end{align*}
This generalises the concept of an open ball. Fortunately thickenings already existed
\mllink{topology/metric_space/hausdorff_distance.lean\#L850} in \mathlib thanks to the work of
Gouezel on the Gromov-Hausdorff metric \cite{Gouezel2021}. We can thus express $W$ as:
\begin{align*}
  W = \limsup_{n > 0} \Th(\delta_n, \{ q \in \Q ~|~ \denom(q) = n \}).
\end{align*}
Using the language of thickenings turned out to be very convenient formally, not just for
Gallagher's theorem but also for example in lemma \ref{lem:cassels} (discussed below).

\subsection{The circle as a normed group}
The subset $W$ appearing in theorem \ref{thm:gallagher_reals} trivially satisfies the periodicity
condition\footnote{If $q$ approximates $x$ then $1 + q$ approximates $1 + x$ with the same
error and $\denom(1 + q) = \denom(q)$.}:
\begin{align*}
  1 + W = W,
\end{align*}
and thus descends to a subset of the circle $\Circ = \R / \Z$. This quotient is actually a
\emph{normed} group; given $x \in \R$ representing the coset $\hat x \in \Circ$, the norm obeys:
\begin{align}\label{circleNormLaw}
  \| \hat x\| = |x - \round (x)|
\end{align}
where $\round(x)$ is the nearest integer to $x$. Since the norm gives us a metric, we may
speak of thickenings of subsets of $\Circ$.

Furthermore, $\Q / \Z \subseteq \Circ$
is exactly the set of points of finite order in $\Circ$. Gallagher's theorem may thus be
regarded as establishing a special property enjoyed by the circle $\Circ$ in the category of normed
groups. As we shall see this is a useful point of view since the proof of Gallagher's theorem
depends on the fact that certain transformations are ergodic when regarded as maps
$\Circ \to \Circ$.

When this work began, \mathlib already contained a model of the circle as complex numbers of unit
length, called \texttt{circle} \mllink{analysis/complex/circle.lean\#L40}. Although this model is
naturally equivalent to $\R / \Z$, the equivalence uses the exponential and logarithm maps
which are irrelevant for our work. We thus introduced a second model called \texttt{add\_circle}
defined to be $\R / \Z$:
\begin{lstlisting}[caption={The additive circle \mllink{topology/instances/add_circle.lean\#L109}},label=def:add_circle,captionpos=t]
def add_circle {𝕜 : Type*} [linear_ordered_add_comm_group 𝕜]
  [topological_space 𝕜] [order_topology 𝕜] (p : 𝕜) :=
𝕜 ⧸ zmultiples p
\end{lstlisting}
When $p = 1$ this is exactly $\R / \Z$ but we allow a general value of $p$ to support other
applications\footnote{For example \mathlib uses $p = 2\pi$ to define angles
\mllink{analysis/special_functions/trigonometric/angle.lean\#L27}.}.

As the names suggest, \texttt{circle} carries an
instance of \mathlib's \texttt{group} class and \texttt{add\_circle} carries an instance of
\texttt{add\_group}. It is interesting that \mathlib's additive-multiplicative design pattern so
conveniently allows both models to coexist.

Substantial API for \texttt{add\_circle} was then developed, notably an
instance of the class \texttt{normed\_add\_comm\_group}
\mllink{analysis/normed/group/add_circle.lean\#L33} satisfying the identity \eqref{circleNormLaw}
\mllink{analysis/normed/group/add_circle.lean\#L262} and a characterisation of the finite-order
points using rational numbers \mllink{topology/instances/add_circle.lean\#L344}:
\begin{align}\label{circle_fin_order_rat}
  \{ y \in \Circ ~|~ o(y) = n \} = \{ [q] \in \Circ ~|~ q \in \Q, \denom(q) = n\},
\end{align}
where the notation $o(y) = n$ means that $y$ has order $n$.

Using all of our new language, the statement of Gallagher's theorem becomes:
\begin{theorem}[Gallagher's theorem]\label{thm:gallagher_circle}
  Let $\delta_1, \delta_2, \ldots$ be a sequence of real numbers and let:
  \begin{align*}
    W = \limsup_{n > 0} \Th(\delta_n, \{ y \in \Circ ~|~ o(y) = n \}).
  \end{align*}
  Then $W =_{a.e.} \emptyset$ or $W =_{a.e.} \Circ$.
\end{theorem}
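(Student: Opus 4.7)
The dichotomy ``$W =_{a.e.} \emptyset$ or $W =_{a.e.} \Circ$'' is equivalent to $\mu(W) \in \{0,1\}$ for Lebesgue measure $\mu$ on $\Circ$, so I would assume $\mu(W) > 0$ and aim to show $\mu(W) = 1$. My strategy is to exhibit $W$ as almost invariant under a family of ergodic transformations of $\Circ$ and then to invoke ergodicity.

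For each prime $p$, let $T_p : \Circ \to \Circ$ denote multiplication by $p$; I would use the fact (to be recorded in section~\ref{sec:ergodic_theory}) that $T_p$ is an ergodic measure-preserving endomorphism. Writing $F_n = \{y \in \Circ : o(y) = n\}$, I would use the bounded $\limsup$ together with \eqref{blimsup_or} to split $W$ according to whether the index $n$ is coprime to or divisible by $p$:
\[
  W \;=\; \limsup_{p\,\nmid\,n}\Th(\delta_n,F_n) \;\cup\; \limsup_{p\,\mid\,n}\Th(\delta_n,F_n).
\]
Multiplication by $p$ bijectively permutes $F_n$ when $\gcd(n,p)=1$ and collapses $F_n$ onto $F_{n/p}$ when $p \mid n$; this should allow one to show that each piece is almost $T_p$-invariant, and hence null or conull by ergodicity. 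Running the analysis over a sufficient family of primes $p$ should then force $W$ itself to be null or conull, with Lebesgue's density theorem (section~\ref{sec:density_thm}) playing its usual role of upgrading positive measure to full measure at any density point, and the ``points of approximately finite order'' language (section~\ref{sec:gallagher_thm}) providing vocabulary to keep the invariance manipulations on limsups tractable.

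The main obstacle is the $T_p$-almost-invariance statement at the level of $\limsup$. On individual thickenings the geometric content is transparent — $T_p$ is locally a $p$-fold cover — but the preimage $T_p^{-1}(\Th(\delta_n,F_n))$ has radii scaled by $1/p$ and centre set enlarged from $F_n$ to $F_n \cup F_{pn}$, so the desired invariance only holds up to an error set that must be shown to vanish after passing to $\limsup$. Controlling this error simultaneously for an infinite family of thickenings is precisely the content of Cassels's lemma (section~\ref{sec:cassels_lemma}), applied in combination with the density theorem to any putative density point of $W$. Formalising this step — translating the informal ``up to a null set'' into a statement that composes cleanly with the \texttt{blimsup} API — is where I expect most of the effort to go.
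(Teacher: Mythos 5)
Your overall strategy --- decompose $W$ by divisibility by a prime $p$, exhibit each piece as almost invariant under an ergodic transformation, reduce the $\delta$-scaling error with Cassels's lemma, and upgrade positive measure to full measure via the density theorem --- is exactly the right circle of ideas, and most of the ingredients you name are the ones the paper uses. But the two-fold split $W = \limsup_{p\nmid n} \cup\, \limsup_{p\mid n}$ does not yield what you need, and the gap is more structural than the {\lq}error set{\rq} you flag. For the second piece, multiplication by $p$ sends $F_n$ (with $p\mid n$) to $F_{n/p}$, and when $p\,\Vert\,n$ the index $n/p$ is \emph{coprime} to $p$ and therefore escapes the index set $\{n : p\mid n\}$; Cassels's lemma fixes the radius error $\delta_n \leadsto p\delta_n$ but cannot repair this index-set drift. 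Consequently $\limsup_{p\mid n}\Th(\delta_n,F_n)$ is not almost $T_p$-invariant, and the claim that ``each piece is almost $T_p$-invariant'' fails.

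The paper's fix is a three-fold decomposition $W = A_p \cup B_p \cup C_p$ according to $p\nmid n$, $p\,\Vert\, n$, and $p^2\mid n$, together with a \emph{different map for each piece}: $A_p$ is almost invariant under $y\mapsto py$ (as you propose); $B_p$ is almost invariant under the affine map $y\mapsto py+[1/p]$, using both the scaling and a translation to bring the index back into $\{n : p\,\Vert\, n\}$ (lemma~\ref{lem:AF_API} parts~\eqref{AF_part_ii} and \eqref{AF_part_iii}); and $C_p$ is \emph{exactly} invariant under the translation $y\mapsto y+[1/p]$ (part~\eqref{AF_part_iv}). A further subtlety your plan misses: this last translation has finite order $p$ and hence is \emph{not} ergodic, so $C_p$ cannot be dispatched by the same ergodicity argument. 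Instead, after showing $A_p$ and $B_p$ are null for every prime $p$ (else one of them is conull and you are done), one gets $W =_{a.e.} C_p$ for all $p$, i.e.\ $W$ is almost invariant under translation by $[1/p]$ for \emph{all} primes, and then the dedicated sequence lemma~\ref{lem:circle_seq_ergodic} (whose hypotheses require the orders $p\to\infty$) closes the argument. Your phrase ``running the analysis over a sufficient family of primes'' is gesturing at this last step, but without the $C_p$ invariance and the sequence lemma there is nothing to run it on.
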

Using \eqref{circleNormLaw} and \eqref{circle_fin_order_rat}, theorem
\ref{thm:gallagher_circle} is trivially equivalent to theorem \ref{thm:gallagher_reals}.

\subsection{Metric number theory}
Metric number theory is the study of arithmetic properties of the real numbers (and related spaces)
which hold {\lq}almost everywhere{\rq} with respect to the Lebesgue measure. The arithmetic property
in the case of Gallagher's theorem is approximation by rational numbers.

To illustrate, consider the set $\mathbb{I}$ of real numbers which have infinitely-many
quadratically-close rational approximations:
\begin{align*}
  \mathbb{I} &= \{ x \in \R ~|~ \exists~ q \in \Q, |x - q| < 1 / \denom(q)^2 ~i.o. \}\\
             &= \limsup_{n > 0} \Th(1/n^2, \{ q \in \Q ~|~ \denom(q) = n \}).
\end{align*}
It has been known at least since the early 19$^{\rm th}$ Century, that $\mathbb{I}$ is just the set
of irrational numbers\footnote{Thanks to Michael Geißer and Michael Stoll, \mathlib knows this fact
\mllink{number_theory/diophantine_approximation.lean\#L258}.}:
\begin{align}\label{I_eq_R_less_Q}
  \mathbb{I} = \R \setminus \Q.
\end{align}

Considering this result from the point of view of metric number theory, we notice that since
$\Q$ has Lebesgue measure zero, $\mathbb{I}$ is almost equal to $\R$. Thus the metric number
theorist would be content to summarise \eqref{I_eq_R_less_Q} by saying that
\begin{align*}
  \mathbb{I} =_{a.e.} \R,
\end{align*}
without worrying about exactly which numbers $\mathbb{I}$ contains.

The benefit of metric number theorist's point of view is that a great many questions have answers of
this shape. Gallagher's theorem is an especially-beautiful example of this phenomenon.

\section{Doubling measures and Lebesgue's density theorem}\label{sec:density_thm}
Lebesgue's density theorem is a foundational result in measure theory, required for the proof of
Gallagher's theorem. Although we only
needed to apply it to the circle, the density theorem holds quite generally and so we took some
trouble to formalise it subject to quite weak assumptions\footnote{We were lucky that Sébastien
Gouëzel had recently added an extremely general theory of Vitali families which made this
possible.}.

\subsection{Doubling measures}
A convenient class of measures for which the density theorem holds is the class of doubling
measures.
\begin{definition}
  Let $X$ be a measurable metric space carrying a measure $\mu$. We say $\mu$ is a doubling measure
  if there exists $C \ge 0$ and $\delta > 0$ such that for all $0 < \epsilon \le \delta$ and
  $x \in X$:
  \begin{align*}
    \mu (B(x, 2\epsilon)) \le C\mu (B(x, \epsilon)).
  \end{align*}
  where $B(x, r)$ denotes the closed ball of radius $r$ about $x$.
\end{definition}
The corresponding formal definition, which the author added to \mathlib for the purposes of
formalising the density theorem, is:
\begin{lstlisting}[caption={Definition of doubling measures \mllink{measure_theory/measure/doubling.lean\#L39}},label=def_doubling_measure,captionpos=t]
class is_doubling_measure
  {α : Type*} [metric_space α] [measurable_space α] (μ : measure α) :=
  (exists_measure_closed_ball_le_mul [] : ∃ (C : ℝ≥0), ∀ᶠ ε in 𝓝[>] 0, ∀ x,
    μ (closed_ball x (2 * ε)) ≤ C * μ (closed_ball x ε))
\end{lstlisting}
The parameter $\delta$ is not explicitly mentioned in the code above because we use \mathlib's
standard notation for the concept of a predicate holding eventually along a filter
\mllink{order/filter/basic.lean\#L953}.

For our application, we needed to apply the density theorem to the Haar measure \cite{Door21}
\mllink{measure_theory/integral/periodic.lean\#L58} on the circle. Of course this turns out to be
the familiar arc-length measure and so the volume of a closed ball of radius
$\epsilon$ is given by \mllink{measure_theory/integral/periodic.lean\#L88}:
\begin{align*}
  \mu(B(x, \epsilon)) = \min(1, 2\epsilon).
\end{align*}
Taking $C = 2$ we thus see that the Haar measure on the circle is
doubling. We registered this fact using a typeclass instance as follows:
\begin{lstlisting}[caption={The circle's doubling measure \mllink{measure_theory/integral/periodic.lean\#L110}},label=def_circle_doubling,captionpos=t]
instance : is_doubling_measure (volume : measure (add_circle T)) :=
\end{lstlisting}
The unit circle corresponds to taking $T = 1$, but the code allows any $T > 0$. Thanks to
this instance, Lean knows that any results proved for doubling measures automatically holds for the
Haar measure on the circle.

\subsection{The density theorem}
The version of the density theorem which we formalised is:
\begin{theorem}\label{density_thm}
  Let $X$ be a measurable metric space carrying a measure $\mu$. Suppose that $X$ has
  second-countable topology and that $\mu$ is doubling and locally finite. Let $S \subseteq X$ and
  $K \in \R$, then for almost all $x \in S$, given any sequence of points $w_0, w_1, \ldots$
  and distances $\delta_0, \delta_1, \ldots$, if:
  \begin{itemize}
    \item $\delta_j \to 0$ as $j \to \infty$ and,
    \item $x \in B(w_j, K\delta_j)$ for large enough $j$,
  \end{itemize}
  then:
  \begin{align*}
    \frac{\mu (S \cap B(w_j, \delta_j))}{\mu (B(w_j, \delta_j))} \to 1,
  \end{align*}
  as $j \to \infty$.
\end{theorem}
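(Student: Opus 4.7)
The plan is to reduce to the centered version of Lebesgue's density theorem and then use the doubling hypothesis to absorb the eccentricity of the balls $B(w_j, \delta_j)$.

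First I would establish the centered statement: for almost every $x \in S$,
\[
\frac{\mu(S \cap B(x, r))}{\mu(B(x, r))} \to 1 \quad \text{as } r \to 0^+.
\]
This is the classical Lebesgue density theorem, and under our hypotheses (doubling, locally finite, second-countable) it falls out of Gouëzel's Vitali family machinery: closed balls form a Vitali family, and the differentiation theorem applied to the restriction of $\mu$ to $S$ (or, for non-measurable $S$, to a measurable hull of $S$) yields the conclusion.

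For the eccentric version, fix $x \in S$ satisfying the centered density, and let $w_j$, $\delta_j \to 0$ with $x \in B(w_j, K\delta_j)$ eventually. The triangle inequality gives the two inclusions
\[
B(w_j, \delta_j) \subseteq B(x, (K+1)\delta_j) \quad \text{and} \quad B(x, \delta_j) \subseteq B(w_j, (K+1)\delta_j),
\]
and iterating the doubling inequality a number of times depending only on $K$ produces a constant $C' = C'(C, K)$ with
\[
\mu(B(x, (K+1)\delta_j)) \le C' \, \mu(B(w_j, \delta_j))
\]
for all sufficiently large $j$. Writing the density deficit in terms of complements and chaining these bounds yields
\[
1 - \frac{\mu(S \cap B(w_j, \delta_j))}{\mu(B(w_j, \delta_j))} \le C' \left( 1 - \frac{\mu(S \cap B(x, (K+1)\delta_j))}{\mu(B(x, (K+1)\delta_j))} \right),
\]
and the right side tends to zero by the centered result applied along the radii $(K+1)\delta_j \to 0$.

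The bulk of the work sits in the centered statement: verifying that closed balls form a Vitali family under the doubling hypothesis is the substantive measure-theoretic ingredient, and is precisely where the recent \mathlib Vitali-family API becomes indispensable. The passage from the centered to the eccentric version is, by contrast, a short geometric comparison relying only on the triangle inequality together with a bounded number of applications of the doubling inequality. I would therefore expect the main formalisation effort to lie in threading the right Vitali-family lemmas in the first step, with the second step essentially a transcription of the inequalities above.
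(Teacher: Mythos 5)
Your argument is correct, but it decomposes the result differently from the paper's formalisation. Both routes hinge on Gouëzel's general Vitali-family differentiation theorem \texttt{vitali\_family.ae\_tendsto\_measure\_inter\_div}. In the paper, the eccentricity is built into the Vitali family itself: the API developed for \texttt{is\_doubling\_measure} constructs a Vitali family whose sets at a point $x$ already include the off-centre closed balls $B(w,\delta)$ with $x \in B(w, K\delta)$, so that the general differentiation theorem applied to this family produces the eccentric density statement in a single invocation. You instead extract the centred statement first (via a Vitali family of centred balls, which is indeed the substantive measure-theoretic step) and then transfer to the eccentric version by the triangle-inequality inclusions together with a bounded number of doubling iterations, giving $\mu(B(x,(K+1)\delta_j)) \le C'\,\mu(B(w_j,\delta_j))$ and the resulting deficit estimate. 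Your two-step route is the classical textbook argument, your transfer inequality is correct, and your remark about replacing a non-measurable $S$ by a measurable hull matches what \texttt{$\mu$.restrict} does in the formal statement; the paper's single-step route buys a more direct Lean proof by letting the Vitali-family machinery absorb the eccentricity, avoiding a separate geometric comparison lemma. Both are valid proofs of the theorem.
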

Even in the special case $K = 1$ and $w_0 = w_1 = \cdots = x$, the result is quite
powerful\footnote{Indeed this is probably the most common version one finds in the literature.}.
A point $x$ satisfying the property appearing in the theorem statement is known as a point of
density 1. Using this language, Lebesgue's density theorem asserts that almost all points of a
set have density 1. In particular if $\mu(S) > 0$ then there must exist a point of density 1
\mllink{measure_theory/measure/measure_space.lean\#L1618}.
As an example, if $X = \R$ and $S$ is the closed interval $[0, 1]$, the set of points of density 1
is the open interval $(0, 1)$.

In fact the formal version which we added to \mathlib is very slightly more general since it allows
$w$ and $\delta$ to be maps from any space carrying a filter. After a preparatory \texttt{variables}
statement:
\begin{lstlisting}[caption={Variables for the density theorem},label=density_variables,captionpos=t]
variables {α : Type*} [metric_space α] [measurable_space α] (μ : measure α)
  [is_doubling_measure μ] [second_countable_topology α] [borel_space α]
  [is_locally_finite_measure μ]
\end{lstlisting}
it looks like this:
\begin{lstlisting}[caption={Lebesgue's density theorem for doubling measures \mllink{measure_theory/covering/density_theorem.lean\#L143}},label=lem_density_thm,captionpos=t]
lemma is_doubling_measure.ae_tendsto_measure_inter_div (S : set α) (K : ℝ) :
  ∀ᵐ x ∂μ.restrict S, ∀ {ι : Type*} {l : filter ι} (w : ι → α) (δ : ι → ℝ)
    (δlim : tendsto δ l (𝓝[>] 0))
    (xmem : ∀ᶠ j in l, x ∈ closed_ball (w j) (K * δ j)), tendsto (λ j,
    μ (S ∩ closed_ball (w j) (δ j)) / μ (closed_ball (w j) (δ j))) l (𝓝 1) :=
\end{lstlisting}
The method of proof is essentially is to develop sufficient API for \texttt{is\_doubling\_measure}
to show that such measure spaces carry certain natural families of subsets called Vitali families
and then to invoke the lemma \texttt{vitali\_family.ae\_tendsto\_measure\_inter\_div}
\mllink{measure_theory/covering/differentiation.lean\#L750} added by Gouëzel as part of an
independent project \cite{Gouezel22}.

\section{Cassels's lemma}\label{sec:cassels_lemma}
A key ingredient in the proof of Gallagher's theorem is the following result due to Cassels.
\begin{lemma}\label{lem:cassels}
  Let $X$ be a measurable metric space carrying a measure $\mu$. Suppose that $X$ has
  second-countable topology and that $\mu$ is doubling and locally finite. Let
  $s_0, s_1, \ldots$ be a sequence of subsets of $X$ and $r_0, r_1, \ldots$ be a sequence of real
  numbers such that $r_n \to 0$ as $n \to \infty$. For any $M > 0$ let:
  \begin{align*}
    W_M = \limsup \Th (Mr_n, s_n),
  \end{align*}
  then:
  \begin{align*}
    W_M =_{a.e.} W_1,
\end{align*}
  i.e., up to sets of measure zero, $W_M$ does not depend on $M$.
\end{lemma}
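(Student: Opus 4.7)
By symmetry it suffices to prove the harder inclusion for $M > 1$, namely $\mu(W_M \setminus W_1) = 0$; the case $0 < M < 1$ reduces to this by rescaling $r_n$ to $M r_n$ and applying the $M > 1$ case to the constant $1/M$. (The trivial direction is that $W_1 \subseteq W_M$ pointwise when $M \geq 1$, since $\Th(r_n, s_n) \subseteq \Th(Mr_n, s_n)$.)

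Write $A_n = \Th(r_n, s_n)$, so that $W_1 = \limsup A_n$, and for each $N$ put $E_N = W_M \setminus \bigcup_{n \geq N} A_n$, the set of points of $W_M$ that lie in no $A_n$ with $n \geq N$. Since $W_M \setminus W_1 = \bigcup_N E_N$, it is enough to fix $N$ and show $\mu(E_N) = 0$. Suppose for contradiction that $\mu(E_N) > 0$. By Lebesgue's density theorem (theorem~\ref{density_thm}) applied to $E_N$, almost every point of $E_N$ is a density-1 point; fix any such $x_0 \in E_N$. Because $x_0 \in W_M$, there is an infinite increasing sequence of indices $n_j \geq N$ with $x_0 \in \Th(Mr_{n_j}, s_{n_j})$, and we may choose witnesses $y_j \in s_{n_j}$ satisfying $d(x_0, y_j) < M r_{n_j}$.

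The crucial geometric observation is that the closed ball $B(y_j, r_{n_j}/2)$ is contained in $A_{n_j}$, since any of its points lies strictly within distance $r_{n_j}$ of $y_j \in s_{n_j}$; hence this ball is disjoint from $E_N$ (by definition of $E_N$, because $n_j \geq N$). On the other hand $x_0 \in B(y_j, M r_{n_j})$ by construction, and $r_{n_j}/2 \to 0$ since $r_n \to 0$. Applying theorem~\ref{density_thm} to $x_0$ with $w_j = y_j$, $\delta_j = r_{n_j}/2$ and $K = 2M$ therefore yields
\[
\frac{\mu\bigl(E_N \cap B(y_j, r_{n_j}/2)\bigr)}{\mu\bigl(B(y_j, r_{n_j}/2)\bigr)} \longrightarrow 1 \quad\text{as } j \to \infty,
\]
which is absurd because the numerator vanishes identically.

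The main obstacle I anticipate is not mathematical but organisational: the argument depends crucially on the general form of theorem~\ref{density_thm}, in which the ball centres $y_j$ are allowed to drift away from $x_0$ (the textbook version with $w_j \equiv x_0$ would be useless here), so it is essential to have invested in that formulation. A minor technicality is bridging the gap between the strict-inequality thickenings $\Th(\cdot, \cdot)$ and the closed balls used in theorem~\ref{density_thm}, handled above by shrinking the inner ball to radius $r_{n_j}/2$ and compensating with $K = 2M$.
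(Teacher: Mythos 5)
Your proof is correct and takes essentially the same route as the paper's: reduce by symmetry to $M>1$, argue by contradiction from $\mu(W_M\setminus W_1)>0$, extract a density-1 point via theorem~\ref{density_thm}, and contradict the density limit using the drifting-centre form of the density theorem with the doubling structure. The decomposition $W_M\setminus W_1=\bigcup_N E_N$ to fix a tail index $N$, the choice of witnesses $y_j\in s_{n_j}$, and the radius-halving trick with $K=2M$ to pass from open thickenings to closed balls are exactly the details the paper elides but the formalisation must supply; your identification of the drifting-centre form of the density theorem as the indispensable ingredient is also precisely the point emphasised in section~\ref{sec:density_thm}.
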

This essentially appears as lemma 9 in \cite{cassels1950} in the special case that:
\begin{enumerate}[(a)]
  \item $X$ is the open interval $(0, 1)$,
  \item $\mu$ is the Lebesgue measure,
  \item $s_n$ is a sequence of points rather than a sequence of subsets.\label{generalise_subsets}
\end{enumerate}

Reusing the \texttt{variables} from listing \ref{density_variables}, the formal version of
lemma \ref{lem:cassels} which we added to \mathlib looks like this:
\begin{lstlisting}[caption={Cassels's lemma \mllink{measure_theory/covering/liminf_limsup.lean\#L268}},label=lem_cassels,captionpos=t]
theorem blimsup_thickening_mul_ae_eq
  (p : ℕ → Prop) (s : ℕ → set α) {M : ℝ} (hM : 0 < M)
  (r : ℕ → ℝ) (hr : tendsto r at_top (𝓝 0)) :
  (blimsup (λ i, thickening (M * r i) (s i)) at_top p : set α) =ᵐ[μ]
  (blimsup (λ i, thickening (r i) (s i)) at_top p : set α) :=
\end{lstlisting}
Several remarks are in order:
\begin{itemize}
  \item The syntax \texttt{s =$^{\rm m}$[$\mu$] t} is \mathlib's notation for sets (or functions)
  $s$, $t$ being almost equal with respect to a measure $\mu$. It is the formal equivalent of the
  popular informal notation \eqref{def:ae_notation}.
  \item The type ascriptions \texttt{: set $\alpha$} appear because of an unresolved
  \emph{typeclass diamond} in \mathlib's library of lattice theory. The issue
  is that the type \texttt{set $\alpha$} is definitionally equal to $\alpha \to \Prop$. Since
  $\Prop$ is a complete boolean algebra \mllink{order/complete_boolean_algebra.lean\#L226}
  it follows \mllink{order/complete_boolean_algebra.lean\#L222} that $\alpha \to \Prop$ is a
  complete boolean algebra. Unfortunately the definition \mllink{data/set/lattice.lean\#L121}
  of the complete boolean algebra structure on \texttt{set $\alpha$}, though mathematically equal,
  is not definitionally equal to that on $\alpha \to \Prop$. Strictly speaking, because
  \texttt{set $\alpha$} is a type synonym, this is a permissible diamond but it would still be
  useful to resolve it.\footnote{The diamond is recorded in \mathlib issue 16932
  \href{https://github.com/leanprover-community/mathlib/issues/16932}{\link}.
  In fact it is only the \texttt{Inf} and \texttt{Sup} fields in the complete boolean algebra
  structures that differ definitionally so this should be fairly easy to resolve.}
  \item Listing \ref{lem_cassels} is stated in terms of \texttt{blimsup}, i.e., a $\limsup$
  bounded by a predicate $p$. As discussed in section \eqref{subsect:inf_often}, this allows us to
  avoid having to deal with subtypes. We will see that this is convenient when applying this lemma
  in the proof of Gallagher's theorem.
  \item The key ingredient in the proof of Cassels's lemma is Lebesgue's density theorem
  \ref{density_thm}. In view of \eqref{ae_iff_measure_zero_pair}, Cassels's lemma requires us to
  establish a pair of measure-zero conditions.
  According to whether $M < 1$ or $M > 1$, exactly one of these two conditions is
  trivial for the two sets appearing in the statement of Cassels's lemmma \ref{lem:cassels}. To
  prove the non-trivial measure-zero condition, one
  argues by contradiction by assuming the measure is strictly positive, applying the density theorem
  to obtain a point of density 1, and showing that this is impossible for a doubling measure. The
  only non-trivial dependency is Lebesgue's density theorem.
  \item Although the modifications required for the generalisation of this lemma from its original
  form in \cite{cassels1950} are straightforward, the generalisation \eqref{generalise_subsets}
  from points to subsets (equivalently from balls to thickenings) is extremely useful formally. In
  the application of this lemma required
  for Gallagher's theorem, $s_n$ is the set of points of order $n$ in the circle. In the informal
  literature, the version of lemma \ref{lem:cassels} for sequences of points can be applied because
  the circle has only finitely-many points of each finite order and so one can enumerate all points
  of finite order as a single sequence of points. This would be messy formally.
\end{itemize}

\section{Ergodic theory}\label{sec:ergodic_theory}
Ergodic theory is the study of measure-preserving maps. Given measure spaces
$(X, \mu_X)$ and $(Y, \mu_Y)$, a measurable map $f : X \to Y$ is measure-preserving if:
\begin{align*}
  \mu_X(f^{-1}(s)) = \mu_Y(s),
\end{align*}
for any measurable set $s \subseteq Y$. For example, given any $c \in \R$, taking Lebesgue
measure on both domain and codomain, the translation $x \mapsto c + x$ is always measure-preserving
whereas the dilation $x \mapsto cx$ is measure-preserving only if $c = \pm 1$. Fortunately
\mathlib already contained an excellent theory of measure-preserving maps.

\subsection{Ergodic maps, general theory}
Within ergodic theory, special attention is paid to ergodic maps.
\begin{definition}
  Let $(X, \mu)$ be a measure space and $f : X \to X$ be measure-preserving.
  We say $f$ is ergodic if for any measurable set $s \subseteq X$:
  \begin{align*}
    f^{-1}(s) = s \implies \mbox{$s$ is almost equal to $\emptyset$ or $X$}.
  \end{align*}
\end{definition}
Ergodicity is key concept in the proof of Gallagher's theorem and so we added the following
definitions to \mathlib:
\begin{lstlisting}[caption={Definition of pre-ergodic \mllink{dynamics/ergodic/ergodic.lean\#L39} and ergodic maps \mllink{dynamics/ergodic/ergodic.lean\#L44}:},label=def:ergodic,captionpos=t]
structure pre_ergodic (μ : measure α . volume_tac) : Prop :=
  (ae_empty_or_univ : ∀ ⦃s⦄, measurable_set s →
    f⁻¹' s = s → s =ᵐ[μ] (∅ : set α) ∨ s =ᵐ[μ] univ)

structure ergodic (μ : measure α . volume_tac) extends
  measure_preserving f μ μ, pre_ergodic f μ : Prop
\end{lstlisting}
The reason for the intermediate definition \texttt{pre\_ergodic} is to support the definition of
quasi-ergodic maps which we also defined, but which do not concern us here.

We then developed some basic API for ergodic maps including the key result:
\begin{lemma}\label{lemma:ae_empty_or_univ_of_image_ae_le}
  Let $X$ be a measurable space with measure $\mu$ such that $\mu(X) < \infty$. Suppose that
  $f : X \to X$ is ergodic, $s \subseteq X$ is measurable, and the image $f(s)$ is almost contained
  in $s$, then $s$ is almost equal to $\emptyset$ or $X$.
\end{lemma}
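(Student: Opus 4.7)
The plan is to reduce to the pre-ergodicity clause of ergodicity by manufacturing a measurable set $t$ which is genuinely (not merely a.e.) $f^{-1}$-invariant and almost equal to $s$. The natural candidate is the tail set
\[
  t = \limsup_n f^{-n}(s) = \bigcap_{N \ge 0} \bigcup_{n \ge N} f^{-n}(s),
\]
since such a tail construction is exactly invariant on the nose.

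First I would upgrade the hypothesis ``$f(s)$ almost contained in $s$'' to a statement about preimages. The pointwise inclusion $s \setminus f^{-1}(s) \subseteq f^{-1}(f(s) \setminus s)$ is immediate from the definition of preimage, and measure-preservation of $f$ applied to the null set $f(s) \setminus s$ then forces $\mu(s \setminus f^{-1}(s)) = 0$, i.e.\ $s \subseteq_{a.e.} f^{-1}(s)$. Iterating (again using that $f$ is measure-preserving) yields $s \subseteq_{a.e.} f^{-n}(s)$ for every $n$. This is where the finiteness assumption $\mu(X) < \infty$ becomes decisive: since $\mu(f^{-n}(s)) = \mu(s) < \infty$, the a.e.\ inclusion $s \subseteq_{a.e.} f^{-n}(s)$ upgrades to full equality $s =_{a.e.} f^{-n}(s)$ for every $n$.

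Next I would verify the two relevant properties of $t$. It is measurable since each $f^{-n}(s)$ is, and a one-line index shift shows it is exactly $f^{-1}$-invariant:
\[
  f^{-1}(t) = \bigcap_{N \ge 0} \bigcup_{n \ge N+1} f^{-n}(s) = \bigcap_{N \ge 1} \bigcup_{n \ge N} f^{-n}(s) = t,
\]
where the last equality uses that the tail unions form a decreasing family. To see $s =_{a.e.} t$, the inclusion $s \subseteq_{a.e.} t$ is immediate from $s =_{a.e.} f^{-n}(s)$ for all $n$, while $t \setminus s \subseteq \bigcup_n (f^{-n}(s) \setminus s)$ is a countable union of null sets. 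With $t$ now measurable and genuinely invariant, the pre-ergodic clause applied to $t$ gives $t =_{a.e.} \emptyset$ or $t =_{a.e.} X$, and the same holds of $s$ by transitivity of $=_{a.e.}$.

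The main obstacle is the opening reduction from the forward image to the preimage: formally $f(s)$ need not be measurable, so ``almost contained'' must be interpreted via outer measure or a measurable envelope of $f(s) \setminus s$, and one has to check the chosen \mathlib predicate pushes through $f^{-1}$ cleanly. Beyond that the proof reduces to routine bookkeeping of a.e.\ relations, which should be well-supported by \mathlib's existing API for measure-preserving maps and for the $\limsup$ of a sequence of sets (cf.\ equation~\eqref{inf_often_eq_limsup}).
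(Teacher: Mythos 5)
Your proof is correct. The paper itself does not spell out an argument for this lemma (it only gives the Lean statement and remarks that the result is ``elementary but not quite trivial''), but your route --- converting $f(s) \subseteq_{a.e.} s$ to $s \subseteq_{a.e.} f^{-1}(s)$, using the finite-measure hypothesis together with measure-preservation to upgrade this to a.e.\ equality for every iterate, and then replacing $s$ by the genuinely $f^{-1}$-invariant measurable set $\limsup_n f^{-n}(s)$ so that the \texttt{pre\_ergodic} clause applies --- is precisely the standard reduction and is, as far as I can tell, the one used in \mathlib. You also correctly flag the one formal subtlety, that $f(s)$ need not be measurable, and handle it properly: the set-theoretic inclusion $s \setminus f^{-1}(s) \subseteq f^{-1}(f(s)\setminus s)$ is combined with the fact that preimages of null sets under a measurable, measure-preserving map are null, and $s \setminus f^{-1}(s)$ is itself measurable, so its measure is genuinely zero rather than merely its outer measure.
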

This result is elementary but not quite trivial and appears formally as follows:
\begin{lstlisting}[caption={Sets that are almost invariant by an ergodic map \mllink{dynamics/ergodic/ergodic.lean\#L179}:},label=lem:ae_empty_or_univ_of_image_ae_le,captionpos=t]
lemma ae_empty_or_univ_of_image_ae_le [is_finite_measure μ]
  (hf : ergodic f μ) (hs : measurable_set s) (hs' : f '' s ≤ᵐ[μ] s) :
  s =ᵐ[μ] (∅ : set X) ∨ s =ᵐ[μ] univ :=
\end{lstlisting}
This is not the first time that ergodic maps have been formalised in a theorem prover and so we have
kept the above account very brief. Indeed the Archive of Formal Proofs for Isabelle/HOL contains an
impressive body of results about ergodic theory due to Sébastien Gouëzel with contributions from
Manuel Eberl, available at the Ergodic Theory entry
\href{https://www.isa-afp.org/entries/Ergodic_Theory.html}{\link}. This entry contains many
results about general ergodic theory that have not yet been added to \mathlib. On the other
hand, we needed to know that certain specific maps on the circle are ergodic and our formalisations
of these results do appear to be the first of their kind. We discuss these next.

\subsection{Ergodic maps on the circle}
In order to prove Gallagher's theorem, we needed the following result:
\begin{theorem}\label{thm:circle_smul_ergodic}
  Given $n \in \N$, the map:
  \begin{align*}
    \Circ &\to \Circ\\
    y &\mapsto ny
  \end{align*}
  is measure-preserving if $n \ge 1$ and is ergodic if $n \ge 2$.
\end{theorem}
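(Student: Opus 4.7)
The plan has two independent parts corresponding to the two claims of the theorem.

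For the measure-preservation claim ($n \geq 1$), I would argue directly without invoking uniqueness of Haar measure. Write $T_n : \Circ \to \Circ$ for the map $y \mapsto ny$. The preimage $T_n^{-1}(s)$ of a Borel set $s$ decomposes as the disjoint union $\bigsqcup_{k=0}^{n-1} \phi_k(s)$, where $\phi_k(y)$ is the lift of $y/n$ shifted by $k/n$. Each $\phi_k$ is a contraction by a factor $1/n$ composed with a translation, so under Haar measure $\mu(\phi_k(s)) = \tfrac{1}{n}\mu(s)$. Summing over the $n$ pieces gives $\mu(T_n^{-1}(s)) = \mu(s)$. Alternatively, since the pushforward $(T_n)_*\mu$ is readily checked to be translation-invariant, one can invoke uniqueness of Haar measure (which \mathlib\ has) and a normalisation check to reach the same conclusion; but the direct argument is likely cleaner in Lean because it avoids invoking the characterisation theorem.

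For the ergodicity claim ($n \geq 2$), I would use Fourier analysis on $\Circ$. Suppose $s$ is measurable with $T_n^{-1}(s) = s$. Since $\mu(\Circ) = 1 < \infty$, the indicator $g = \mathbb{1}_s$ lies in $L^2(\Circ)$, and $g \circ T_n = g$ almost everywhere. Expand $g$ in the Fourier basis $\chi_k(y) = e^{2\pi i k y}$ as $g = \sum_{k \in \Z} c_k \chi_k$. Then $g \circ T_n = \sum_k c_k \chi_{nk}$, so comparing coefficients gives $c_k = 0$ whenever $n \nmid k$, and $c_{nk} = c_k$ otherwise. Iterating yields $c_k = c_{nk} = c_{n^2 k} = \cdots$ for every $k$. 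For $k \neq 0$, the indices $n^j k$ are all distinct (because $n \geq 2$), so square-summability $\sum_k |c_k|^2 < \infty$ forces $c_k = 0$. Hence $g$ is almost everywhere equal to the constant $c_0$, which, since $g$ takes only values in $\{0,1\}$, means $s =_{a.e.} \emptyset$ or $s =_{a.e.} \Circ$.

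The main obstacle I expect is marshalling the Fourier-analytic input in \mathlib: one needs Parseval/Plancherel for $L^2(\Circ)$, the completeness of the character basis, and the behaviour of Fourier coefficients under the pullback by $T_n$. If those pieces are not yet available in a convenient form for \texttt{add\_circle}, a workaround is to replace the Hilbert-space argument by density of trigonometric polynomials in $C(\Circ)$ combined with the Stone–Weierstrass theorem and an approximation argument on $\mathbb{1}_s$. A lesser but real irritation is the coexistence of the two circle models (\texttt{circle} vs.\ \texttt{add\_circle}) discussed earlier: any existing Fourier API stated for complex unit-length numbers will have to be transported across the natural isomorphism, or re-proved directly on \texttt{add\_circle}.
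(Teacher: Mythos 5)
Your argument is correct but takes a genuinely different route from the paper for the ergodicity part. For measure-preservation, you offer both a direct tiling argument and a uniqueness-of-Haar argument; the paper uses the latter, remarking that it works for any compact, Abelian, divisible topological group and comes essentially for free from \mathlib's Haar measure theory, so the direct argument would likely be more work in practice, not less. For ergodicity, the paper does \emph{not} use Fourier analysis. Instead it first proves a more general lemma (Lemma~\ref{lem:circle_seq_ergodic}): if $s$ is measurable and there is a sequence $u_i$ of finite-order points with $u_i + s =_{a.e.} s$ and $o(u_i) \to \infty$, then $s$ is almost empty or almost full. The proof of that lemma goes through Lebesgue's density theorem: a density point of $s$ forces $\mu(s)=1$ once one can translate $s$ onto itself by points that are $1/o(u_i)$-dense in $\Circ$. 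Theorem~\ref{thm:circle_smul_ergodic} then follows by taking $u_i = [1/n^i]$, noting $f^{-1}(s)=s$ implies $u_i + s = s$. Your Fourier-analytic proof is the classical textbook argument and is perfectly valid, but the paper's route has two advantages in this specific formalisation context: it reuses the density-theorem infrastructure that is already essential elsewhere in Gallagher's proof (notably in Cassels's lemma), and the intermediate lemma it proves is itself needed directly at the end of the proof of Theorem~\ref{thm:gallagher_circle_decaying}. Your approach would instead require building out an $L^2$ Fourier theory on \texttt{add\_circle} (Parseval, completeness of characters, pullback of coefficients), which you correctly flag as a substantial dependency; as of the time of the paper, that API did not exist in the form you would need, so your plan trades one hard prerequisite for another.
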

The fact that $y \mapsto ny$ is measure-preserving follows from general uniqueness results for
Haar measures. In fact the result holds for any compact, Abelian, divisible topological group.
Thanks to \mathlib's extensive theory of Haar measure \cite{Door21}, it was easy to add a proof of
this \mllink{measure_theory/measure/haar.lean\#L741}. We encourage readers who are
encountering this fact for the first time to examine figure \ref{fig:measure_preserving} and
appreciate why this result holds for $\Circ$ despite failing for $\R$.
\begin{figure}
  \centering
  \begin{tikzpicture}
    \node [circle,minimum size=80pt,draw] at (0pt,0pt){};
    \draw [ultra thick] (40pt,0pt) arc [start angle=0, end angle=90, x radius=40pt, y radius=40pt];
    \node [circle,minimum size=80pt,draw] at (200pt,0pt){};
    \draw [ultra thick] (240pt,0pt) arc [start angle=0, end angle=45, x radius=40pt, y radius=40pt];
    \draw [ultra thick] (160pt,0pt) arc [start angle=180, end angle=225, x radius=40pt, y radius=40pt];
    \node at (0pt, 60pt) {A subset $s \subseteq \Circ$};
    \node at (200pt, 60pt) {Its preimage $f^{-1}(s)$};
  \end{tikzpicture}
  \caption{The map $f : y \mapsto 2y$ is measure-preserving.}
  \label{fig:measure_preserving}
\end{figure}
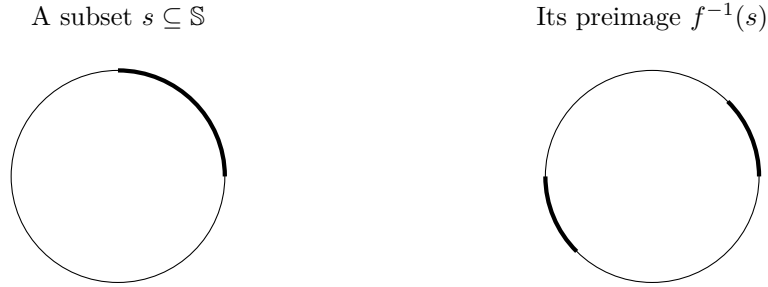

The proof that $y \mapsto ny$ is ergodic is harder. We proved it as corollary of the following
lemma. We sketch a proof to give a sense of what is involved; it is not essential
that the reader follow the details: the main point is that we needed to use Lebesgue's density
theorem.
\begin{lemma}\label{lem:circle_seq_ergodic}
  Let $s \subseteq \Circ$ be measurable and $u_0, u_1, \ldots$ be a sequence of finite-order
  points in $\Circ$ such that:
  \begin{itemize}
    \item $u_i + s$ is almost equal to $s$ for all $i$,
    \item the order $o(u_i) \to \infty$ as $i \to \infty$.
  \end{itemize}
  Then $s$ is almost equal to $\emptyset$ or $X$.
\end{lemma}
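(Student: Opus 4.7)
The plan is to argue by contradiction: if neither $s$ nor its complement has measure zero, incompatible asymptotic densities will emerge from the fact that the cyclic subgroup $\langle u_i\rangle$ becomes arbitrarily fine as $i \to \infty$ (its $o(u_i)$ equally spaced points have spacing $1/o(u_i) \to 0$ in $\Circ$).

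First I would extend the hypothesis $u_i + s =_{a.e.} s$ by induction on $|k|$ to $k u_i + s =_{a.e.} s$ for every $k \in \Z$, using translation invariance of Haar measure on $\Circ$. If $\mu(s) = 0$ there is nothing to prove, so suppose $\mu(s) > 0$ and, for contradiction, $\mu(s^c) > 0$. Lebesgue's density theorem (Theorem~\ref{density_thm}) applied to $s$ and then to $s^c$ produces a density $1$ point $x$ of $s$ and a density $1$ point $z$ of $s^c$. For each sufficiently large $i$, since $\{k u_i : 0 \le k < o(u_i)\}$ is a $1/o(u_i)$-net in $\Circ$, pick $k_i \in \Z$ with $\|k_i u_i - (z - x)\| \le 1/(2\, o(u_i))$, and set $w_i := x + k_i u_i$ and $\delta_i := 1/(2\, o(u_i))$. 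Then $\delta_i \to 0$ and $z \in B(w_i, \delta_i)$.

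Applying the density theorem at $z$ to the set $s^c$ (with $K = 1$) yields $\mu(s \cap B(w_i, \delta_i)) / \mu(B(w_i, \delta_i)) \to 0$. On the other hand, combining $k_i u_i + s =_{a.e.} s$ with translation invariance of Haar measure gives
\begin{align*}
  \mu(s \cap B(w_i, \delta_i)) = \mu((k_i u_i + s) \cap B(w_i, \delta_i)) = \mu(s \cap B(x, \delta_i)),
\end{align*}
and analogously $\mu(B(w_i, \delta_i)) = \mu(B(x, \delta_i))$, so this same ratio equals $\mu(s \cap B(x, \delta_i)) / \mu(B(x, \delta_i))$. The density theorem applied at $x$ to $s$ (with the constant sequence $w_j \equiv x$, $K = 1$) forces this quantity to tend to $1$, contradicting the previous limit.

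The step I expect to be most delicate is the careful bookkeeping of the three distinct uses of translation: the inductive extension of almost-invariance to $k_i u_i + s$, the identification $B(w_i, \delta_i) - k_i u_i = B(x, \delta_i)$, and the translation invariance of Haar measure itself. A minor further subtlety is that the density theorem supplies a single full-measure set of good base points valid for every admissible sequence $(w_j, \delta_j)$, so one must verify that the chosen sequence satisfies the hypotheses $\delta_i \to 0^+$ and eventual containment $z \in B(w_i, K \delta_i)$ at the point $z$, as well as the (trivial) constant-sequence variant at $x$.
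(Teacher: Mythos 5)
Your proof is correct, but it takes a genuinely different route from the paper's. The paper's argument works with a \emph{single} density-$1$ point $d$ of $s$: for each $i$ it takes the closed ball $B_i$ of measure $1/o(u_i)$ centred at $d$, notes that the $o(u_i)$ translates of $B_i$ by elements of $\langle u_i\rangle$ tile the circle, and uses the a.e.-invariance of $s$ under that whole cyclic subgroup to compute the ratio exactly, $\mu(s\cap B_i)/\mu(B_i)=\mu(s)$; combined with $\mu(s\cap B_i)/\mu(B_i)\to 1$ from the density theorem, this gives $\mu(s)=1$ directly, with no contradiction. Your version instead extracts density-$1$ points of both $s$ and its complement, and for each $i$ selects a \emph{single} translate $k_iu_i$ that carries a small ball at one point onto a small ball at the other, then derives a contradiction from two applications of the density theorem. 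Both proofs need the same two ingredients — iterating the hypothesis to $ku_i+s=_{a.e.}s$ for all $k$, and translation-invariance of Haar measure — but they deploy them differently: the paper performs a global sum over the group (a tiling argument) to pin down the ratio, whereas you use a local argument that moves one ball, at the cost of a second density point and the bookkeeping of choosing $k_i$ with $\|k_iu_i-(z-x)\|\le 1/(2\,o(u_i))$. The paper's route is slightly shorter and is what was formalised, but your approach would work equally well; it even sidesteps the fundamental-domain computation, which is arguably the most delicate step in the paper's sketch.
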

\begin{proof}
The result is fairly intuitive: $s$ is almost equal to
$u_i + s$ iff it is composed of a collection of $o(u_i)$ components, evenly-spaced throughout the
circle, up to a set of measure zero. Since this holds for all $i$ and $o(u_i) \to \infty$,
such components must either fill out the circle or be entirely absent, up to a set of measure zero.

The way to turn the above intuitive argument into rigorous proof is to use Lebesgue's density
theorem \ref{density_thm}. We must show that if $s$ is not almost empty then
$\mu(s) = 1$. Lebesgue tells us that if $s$ is not almost empty it must contain some point $d$ of
density 1. Using $d$, we construct the sequence of closed balls $B_i$ centred on $d$ such that
$\mu(B_i) = o(u_i)$. Because $u_i + s$ is almost $s$,
\begin{align*}
  \mu(s \cap B_i) = o(u_i) \mu(s) = \mu(B_i)\mu(s).
\end{align*}
However since $d$ has density 1, we know that:
\begin{align*}
  \mu(s \cap B_i) / \mu(B_i) \to 1.
\end{align*}
These two results force us to conclude that $\mu(s) = 1$.
\end{proof}

The formal version is very slightly more general and appears in \mathlib as follows:
\begin{lstlisting}[caption={Formal statement of lemma \ref{lem:circle_seq_ergodic} \mllink{dynamics/ergodic/add_circle.lean\#L40}},label=lst:circle_seq_ergodic,captionpos=t]
lemma add_circle.ae_empty_or_univ_of_forall_vadd_ae_eq_self
  {s : set $ add_circle T} (hs : null_measurable_set s volume)
  {ι : Type*} {l : filter ι} [l.ne_bot] {u : ι → add_circle T}
  (hu₁ : ∀ i, ((u i) +ᵥ s : set _) =ᵐ[volume] s)
  (hu₂ : tendsto (add_order_of ∘ u) l at_top) :
  s =ᵐ[volume] (∅ : set $ add_circle T) ∨ s =ᵐ[volume] univ :=
\end{lstlisting}

Theorem \ref{thm:circle_smul_ergodic} follows from lemma \ref{lem:circle_seq_ergodic} because
any set $s$ satisfying $f^{-1}(s) = s$ for $f : y \mapsto ny$ satisfies $u_i + s = s$ for the
sequence:
\begin{align*}
  u_i = [1 / n^i] \in \Circ.
\end{align*}
Note that we need $n \ge 2$ in order to have $o(u_i) = n^i \to \infty$.
The formal statement appears in \mathlib as follows:
\begin{lstlisting}[caption={Formal statement of theorem \ref{thm:circle_smul_ergodic} \mllink{dynamics/ergodic/add_circle.lean\#L133}},label=lst:circle_smul_ergodic,captionpos=t]
lemma ergodic_nsmul {n : ℕ} (hn : 1 < n) :
  ergodic (λ (y : add_circle T), n • y) :=
\end{lstlisting}

In fact we needed the following mild generalisation of theorem \ref{thm:circle_smul_ergodic}:
\begin{theorem}\label{thm:circle_smul_add_ergodic}
  Given $n \in \N$ and $x \in \Circ$, the map:
  \begin{align*}
    \Circ &\to \Circ\\
    y &\mapsto ny + x
  \end{align*}
  is measure-preserving if $n \ge 1$ and is ergodic if $n \ge 2$.
\end{theorem}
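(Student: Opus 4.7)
The plan is to reduce Theorem \ref{thm:circle_smul_add_ergodic} to Theorem \ref{thm:circle_smul_ergodic} by conjugating the affine map $f(y) = ny + x$ to the linear map $g(y) = ny$ via a suitable translation of the circle.

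Measure-preservation is easy: the translation $T_x(y) = y + x$ is measure-preserving because the Haar measure on $\Circ$ is translation-invariant, and $g$ is measure-preserving for $n \ge 1$ by Theorem \ref{thm:circle_smul_ergodic}. Since $f = T_x \circ g$, the composition is measure-preserving.

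For ergodicity when $n \ge 2$, I would look for $c \in \Circ$ such that the translation $\phi = T_c$ satisfies $f = \phi \circ g \circ \phi^{-1}$. A direct calculation gives $\phi \circ g \circ \phi^{-1}(y) = ny + (1-n)c$, so we require $(1-n)c = x$, equivalently $(n-1)c = -x$. This is where the hypothesis $n \ge 2$ is essential: since $n - 1 \ge 1$, multiplication by $n-1$ on $\Circ$ is surjective (it is an $(n-1)$-to-one covering), so a suitable $c$ exists. I expect this to be the only mildly non-trivial point when formalising, because one must actually produce a representative $c$ rather than merely invoke surjectivity; concretely one can take $c = -\tilde{x}/(n-1)$ for any real lift $\tilde x$ of $x$.

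Granted such a $c$, the argument concludes smoothly. If $s \subseteq \Circ$ is measurable with $f^{-1}(s) = s$, then from $f^{-1} = \phi \circ g^{-1} \circ \phi^{-1}$ we obtain $g^{-1}(\phi^{-1}(s)) = \phi^{-1}(s)$, so $\phi^{-1}(s)$ is $g$-invariant. By the ergodicity of $g$ (Theorem \ref{thm:circle_smul_ergodic}), $\phi^{-1}(s)$ is almost equal to $\emptyset$ or to $\Circ$. Since $\phi$ is a measure-preserving bijection, applying $\phi$ transfers this dichotomy back to $s$, completing the proof. Formally, it may be cleanest to package the conjugation step as a general lemma stating that ergodicity is invariant under measure-preserving conjugation, and then instantiate it here; this keeps the proof of Theorem \ref{thm:circle_smul_add_ergodic} itself to little more than exhibiting $c$.
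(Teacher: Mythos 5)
Your proposal is correct and mirrors the paper's proof: both reduce to Theorem~\ref{thm:circle_smul_ergodic} by conjugating the affine map to the linear one via translation by (a lift of) $\pm x/(n-1)$, and both package the conjugation step as a general lemma that ergodicity is invariant under conjugation by a measure-preserving equivalence (the paper's \texttt{ergodic\_conjugate\_iff}). The only cosmetic difference is the direction of conjugation, which swaps the sign of the translation.
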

This follows easily from theorem \ref{thm:circle_smul_ergodic} because if we define the
measure-preserving equivalence:
\begin{align*}
  e : \Circ &\to \Circ\\
  y &\mapsto \frac{x}{n-1} + y
\end{align*}
then a quick calculation reveals:
\begin{align*}
  e \circ g \circ e^{-1} = f,
\end{align*}
where $f : y \mapsto ny$ and $g : y \mapsto ny + x$. As a result, theorem
\ref{thm:circle_smul_add_ergodic} follows from theorem \ref{thm:circle_smul_ergodic} via:
\begin{lstlisting}[caption={The reduction of theorem \ref{thm:circle_smul_add_ergodic} to theorem \ref{thm:circle_smul_ergodic} \mllink{dynamics/ergodic/ergodic.lean\#L99}},label=lst:ergodic_conjugate_iff,captionpos=t]
lemma ergodic_conjugate_iff {e : α ≃ᵐ β} (h : measure_preserving e μ μ') :
  ergodic (e ∘ f ∘ e.symm) μ' ↔ ergodic f μ :=
\end{lstlisting}

\section{Gallagher's theorem}\label{sec:gallagher_thm}
\subsection{Points of approximate order}
Recall the definition of the set $W \subseteq \Circ$ appearing in the statement of theorem
\ref{thm:gallagher_circle}:
\begin{align*}
  W = \limsup_{n > 0} \Th(\delta_n, \{ y \in \Circ ~|~ o(y) = n \}).
\end{align*}
Key to the proof of theorem \ref{thm:gallagher_circle} is the way in which the sets
$\Th(\delta_n, \{ y \in \Circ ~|~ o(y) = n \})$ interact with the group structure of $\Circ$.
We thus made the following definition:
\begin{definition}\label{def:approx_order}
  Let $A$ be a seminormed group, $n \in \N$ (non-zero), and $\delta \in \R$. We shall use the
  notation:
  \begin{align*}
    \mathbb{AO}(A, n, \delta) = \Th(\delta, \{ y \in A ~|~ o(y) = n \}),
  \end{align*}
  for the set of points that have approximate order $n$, up to a distance $\delta$.
\end{definition}
For example, as shown in figure \ref{fig:approx_order}, $\mathbb{AO}(\Circ, n, \delta)$ is a union
of $\varphi(n)$ arcs of diameter $2\delta$, centred on the points $[m/n]$ with $0 \le m < n$ and $m$
coprime to $n$ (where $\varphi$ is Euler's totient function).
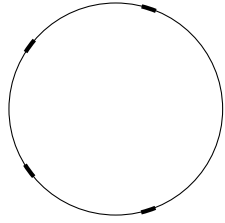
\begin{figure}
  \centering
  \begin{tikzpicture}
    \node [circle,minimum size=80pt,draw] at (0pt,0pt){};
    \draw [ultra thick] (15pt,37pt) arc [start angle=68, end angle=76, x radius=40pt, y radius=40pt];
    \draw [ultra thick] (-30.5pt,26pt) arc [start angle=140, end angle=148, x radius=40pt, y radius=40pt];
    \draw [ultra thick] (-34pt,-21pt) arc [start angle=212, end angle=220, x radius=40pt, y radius=40pt];
    \draw [ultra thick] (9.5pt,-39pt) arc [start angle=284, end angle=292, x radius=40pt, y radius=40pt];
  \end{tikzpicture}
  \caption{The set of points of approximate order 5 in $\Circ$, up to a distance
  $\delta \approx 0.01$.}
  \label{fig:approx_order}
\end{figure}

The formal counterpart of definition \ref{def:approx_order} is:
\begin{lstlisting}[caption={Points of approximate order in a normed group \mllink{number_theory/well_approximable.lean\#L62}},label=lst:approx_add_order_of,captionpos=t]
@[to_additive] def approx_order_of
  (A : Type*) [seminormed_group A] (n : ℕ) (δ : ℝ) : set A :=
thickening δ {y | order_of y = n}
\end{lstlisting}
Using this language, the only properties of $\mathbb{AO}(A, n, \delta)$ that we needed are
as follows:
\begin{lemma}\label{lem:AF_API}
  Let $A$ be a seminormed commutative group, $\delta \in \R$, $a \in A$, and $m, n \in \N$
  (both non-zero). Then\footnote{If $s \subseteq A$ the notation $m \cdot s$ means
  $\{ my ~|~ y \in s \}$.}:
  \begin{enumerate}[(i)]
    \item\label{AF_part_i} $m \cdot \mathbb{AO}(A, n, \delta) \subseteq \mathbb{AO}(A, n, m\delta)$
    if $m$, $n$ are coprime \mllink{number_theory/well_approximable.lean\#L90},
    \item\label{AF_part_ii} $m \cdot \mathbb{AO}(A, nm, \delta) \subseteq \mathbb{AO}(A, n, m\delta)$
    \mllink{number_theory/well_approximable.lean\#L102},
    \item\label{AF_part_iii} $a + \mathbb{AO}(A, n, \delta) \subseteq \mathbb{AO}(A, o(a)n, \delta)$
    if $o(a)$ and $n$ are coprime \mllink{number_theory/well_approximable.lean\#L115},
    \item\label{AF_part_iv} $a + \mathbb{AO}(A, n, \delta) = \mathbb{AO}(A, n, \delta)$ if
    $o(a)^2$ divides $n$ \mllink{number_theory/well_approximable.lean\#L128}.
  \end{enumerate}
\end{lemma}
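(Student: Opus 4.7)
The plan is to unfold the definition of $\mathbb{AO}(A, n, \delta)$ as a thickening: $x \in \mathbb{AO}(A, n, \delta)$ iff there exists $y \in A$ with $o(y) = n$ and $\|x - y\| < \delta$. All four inclusions then amount to producing, from such a witness $y$ for $x$, an appropriate witness for the transformed point, while tracking the effect of the group operation on both the order and the distance.

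For (i) and (ii), the natural candidate witness for $mx$ is $my$. The standard abelian-group identity $o(my) = o(y)/\gcd(m, o(y))$ gives $o(my) = n/\gcd(m, n) = n$ under the coprimality assumption of (i), and $o(my) = nm/\gcd(m, nm) = n$ for (ii). The distance bound $\|mx - my\| = \|m(x - y)\| \le m\|x - y\| < m\delta$ is immediate from subadditivity of the seminorm.

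For (iii) and (iv), the witness for $a + x$ is $a + y$; translation-invariance of the seminorm gives $\|(a+x) - (a+y)\| = \|x-y\| < \delta$ for free, so the content of each case is an order computation. For (iii), when $\gcd(o(a), o(y)) = 1$ in an abelian group, the subgroups $\langle a \rangle$ and $\langle y \rangle$ meet trivially, hence $o(a + y) = o(a) \cdot o(y) = o(a)n$ as required. For (iv), since $o(-a) = o(a)$, the forward inclusion $a + \mathbb{AO}(A, n, \delta) \subseteq \mathbb{AO}(A, n, \delta)$ applied to $-a$ yields the reverse inclusion, so it suffices to prove the forward direction.

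The main obstacle is the order computation for (iv): one must show $o(a+y) = n$ whenever $r := o(a)$ satisfies $r^2 \mid n$ and $o(y) = n$. Since $r \mid n$, we have $n(a+y) = 0$, so $d := o(a+y)$ divides $n$. The relation $d(a+y) = 0$ rearranges to $da = -dy$; computing orders on both sides gives $r/\gcd(d, r) = n/d$, hence $d = (n/r)\gcd(d, r)$, so $d$ is a multiple of $n/r$. The hypothesis $r^2 \mid n$ means $r \mid n/r$, whence $r \mid d$, so $\gcd(d, r) = r$ and $d = n$. The other three parts reduce to routine applications of standard identities for orders and seminorms, but (iv) requires this slightly delicate calculation, in which the hypothesis $o(a)^2 \mid n$ is used in an essential way.
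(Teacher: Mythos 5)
Your proof is correct, and it follows the only sensible route: unfold the thickening, transport the witness $y$ with $o(y)$ prescribed to $my$ or $a+y$, then track the effect on the norm and on the order. The paper itself does not exhibit a proof of this lemma (it only points to the formalisation), so there is nothing to diverge from; your order computation for part (iv) — reducing $d(a+y)=0$ to $da=-dy$, taking orders to get $d=(n/r)\gcd(d,r)$, and then invoking $r^2\mid n$ to force $r\mid d$ and hence $d=n$ — is exactly the delicate step the hypothesis $o(a)^2\mid n$ is designed to enable, and you have carried it out correctly, including the observation that the reverse inclusion follows by applying the forward one to $-a$.
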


In fact property \eqref{AF_part_iv} holds under the weaker assumption that $r(o(a))o(a)$
divides $n$ where $r(l)$ denotes the radical of a natural number $l$, but we needed only the version
stated in the lemma.

We made one last definition in support of theorem \ref{thm:gallagher_circle}:
\begin{definition}\label{def:well_approx}
  Let $A$ be a seminormed group and $\delta_1, \delta_2, \ldots$ a sequence of real numbers.
  We shall use the notation:
  \begin{align*}
    \mathbb{WA}(A, \delta) = \limsup_{n > 0} \mathbb{AO}(A, n, \delta_n),
  \end{align*}
  for the set of elements of $A$ that are well-approximable by points of finite order, relative to
  $\delta$.
\end{definition}
Note that $W = \mathbb{WA}(\Circ, \delta)$ where $W$ is the set appearing in the statement
of theorem \ref{thm:gallagher_circle}. The formal counterpart of definition \ref{def:well_approx}
is:
\begin{lstlisting}[caption={The set of well-approximable elements of a normed group \mllink{number_theory/well_approximable.lean\#L77}},label=lst:def_well_approx,captionpos=t]
@[to_additive] def well_approximable
  (A : Type*) [seminormed_group A] (δ : ℕ → ℝ) : set A :=
blimsup (λ n, approx_order_of A n (δ n)) at_top (λ n, 0 < n)
\end{lstlisting}
The additive version of this definition is \texttt{add\_well\_approximable}.

\subsection{The main theorem}
We are finally in a position to assemble everything and provide a proof of our main result.
For the reader's convenience we reproduce the formal statement which appeared above
in listing \ref{lst:gallagher}:
\begin{lstlisting}[caption={Gallagher's theorem \mllink{number_theory/well_approximable.lean\#L191}},label=lst:gallagher2,captionpos=t]
theorem add_well_approximable_ae_empty_or_univ
  (δ : ℕ → ℝ) (hδ : tendsto δ at_top (𝓝 0)) :
  (∀ᵐ x, ¬ add_well_approximable 𝕊 δ x) ∨ ∀ᵐ x, add_well_approximable 𝕊 δ x :=
\end{lstlisting}
The notation $\forall^{m} x, \cdots$ should be read {\lq}for almost all $x \cdots${\rq} and is
standard \mathlib notation \mllink{measure_theory/measure/measure_space_def.lean\#L512}.
Using the lemmas \texttt{filter.eventually\_eq\_empty}
\mllink{order/filter/basic.lean\#L1375} and \texttt{filter.eventually\_eq\_univ}
\mllink{order/filter/basic.lean\#L1272} the statement in listing \ref{lst:gallagher2} is equivalent
to:
\begin{theorem}[Gallagher's theorem with $\delta \to 0$]\label{thm:gallagher_circle_decaying}
  Let $\delta_1, \delta_2, \ldots$ be a sequence of real numbers such that $\delta_n \to 0$ as
  $n \to \infty$. Then $\mathbb{WA}(\Circ, \delta)$ is almost equal to either $\emptyset$ or
  $\Circ$.
\end{theorem}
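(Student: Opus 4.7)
The plan is to reduce the theorem to an application of Lemma \ref{lemma:ae_empty_or_univ_of_image_ae_le}: it suffices to exhibit an ergodic map $f : \Circ \to \Circ$ (for the finite Haar measure) such that $f(W) \subseteq_{a.e.} W$, where $W := \mathbb{WA}(\Circ, \delta)$. The lemma then delivers $W =_{a.e.} \emptyset$ or $W =_{a.e.} \Circ$ at once, and Theorem \ref{thm:circle_smul_add_ergodic} supplies an abundant family of candidate $f$'s, namely $f(y) = ny + x$ for any $n \ge 2$ and $x \in \Circ$.

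To verify this almost-inclusion for an appropriate $f$, I would fix a prime $p \ge 2$ and decompose $W$ according to the $p$-adic valuation of the index $n$. Using identity \eqref{blimsup_or} twice, write
\begin{align*}
  W = A_p \cup B_p \cup C_p,
\end{align*}
where $A_p$, $B_p$, $C_p$ are the \texttt{blimsup}'s restricted to indices with $p \nmid n$, $p \| n$, and $p^2 \mid n$ respectively. The four parts of Lemma \ref{lem:AF_API} now each play a role: property (iv) shows $C_p$ is exactly translation-invariant by any element of order $p$; properties (i) and (ii) describe the image of $\mathbb{AO}(\Circ, n, \delta_n)$ under multiplication by $p$, under which the order is either preserved (if $p \nmid n$) or divided by $p$ (if $p \mid n$), while $\delta_n$ is scaled by $p$ in either case; property (iii) describes the effect of translation by a point of order $p$ when $(p, n) = 1$, under which the order is multiplied by $p$. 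The factors of $p$ scaling $\delta_n$ are peeled off up to a null set by Cassels's Lemma \ref{lem:cassels}, using the hypothesis $\delta_n \to 0$, and the order-shifts are then arranged to match up so that each of $f(A_p)$, $f(B_p)$, $f(C_p)$ is almost contained in $W$, giving $f(W) \subseteq_{a.e.} W$.

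The main obstacle will be the combinatorial bookkeeping: one must choose $f$ (both $p$ and $x$) and align it with the $p$-valuation stratification so that every order-shift produced by Lemma \ref{lem:AF_API} is cancelled, and every $\delta$-scaling absorbed by Cassels, so that each piece of $f(W)$ lands back inside $W$ rather than in some a priori unrelated $\limsup$ over a shifted subsequence of $\delta$'s. The \texttt{blimsup}-centric design should make this bookkeeping tractable, since it lets us slice and reassemble $\limsup$'s by arbitrary logical predicates without passing to subtypes. Everything else — the ergodic input, Cassels's density-theoretic input, and the algebra of $\mathbb{AO}$ — has already been packaged in the preceding sections.
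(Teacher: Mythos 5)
Your proposal correctly identifies the decomposition $W = A_p \cup B_p \cup C_p$ and the roles of Lemma~\ref{lem:AF_API}, Cassels's lemma, and the ergodic machinery, but the architecture you build around them does not work. You aim to produce a \emph{single} ergodic map $f$ with $f(W) \subseteq_{a.e.} W$ and then apply Lemma~\ref{lemma:ae_empty_or_univ_of_image_ae_le} once to $W$. The ``combinatorial bookkeeping'' you flag as the main obstacle is in fact fatal, not merely tedious: no choice of $f(y)=py+x$ closes up. With $f(y) = py$, Lemma~\ref{lem:AF_API}(\ref{AF_part_i}) and Cassels give $p\cdot A_p \subseteq_{a.e.} A_p \subseteq W$, but for $n=pm$ with $p \| n$, part~(\ref{AF_part_ii}) sends $\mathbb{AO}(\Circ, pm, \delta_{pm})$ into $\mathbb{AO}(\Circ, m, p\delta_{pm})$, so $p \cdot B_p$ lands in $\limsup_m \mathbb{AO}(\Circ, m, p\delta_{pm})$ --- a limsup over the \emph{reindexed} sequence $m \mapsto \delta_{pm}$. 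Cassels can strip the factor $p$, but it cannot identify $\limsup_m \mathbb{AO}(\Circ, m, \delta_{pm})$ with anything inside $W$ unless $\delta_{pm} \le \delta_m$, which is not assumed. The same index-shift afflicts $C_p$, and taking $f(y)=py+[1/p]$ merely trades which pieces misbehave: $B_p$ becomes tractable (via parts~(\ref{AF_part_ii}) and~(\ref{AF_part_iii})) while $A_p$ and $C_p$ do not.

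The paper's proof avoids this by never attempting to make $W$ invariant under one map. Instead it applies Lemma~\ref{lemma:ae_empty_or_univ_of_image_ae_le} to $A_p$ and $B_p$ \emph{separately}, each with its own ergodic map ($y\mapsto py$ and $y\mapsto py+[1/p]$ respectively): if either is non-negligible for some prime $p$ then that piece, hence $W$, is a.e.\ all of $\Circ$ and we are done. Otherwise $W =_{a.e.} C_p$ for every prime $p$, and $C_p$ is exactly invariant under the translation $y \mapsto y + [1/p]$ by part~(\ref{AF_part_iv}). But this translation is periodic, hence not ergodic, so Lemma~\ref{lemma:ae_empty_or_univ_of_image_ae_le} cannot finish the job; the crucial final step is Lemma~\ref{lem:circle_seq_ergodic}, fed with the sequence $u_i=[1/p_i]$ over an increasing enumeration of primes, so that the orders $o(u_i)=p_i \to \infty$. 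This sequential ergodicity lemma, and the need to range over infinitely many primes rather than fix one, is the idea missing from your plan.
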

\begin{proof}
  For each prime $p \in \N$ we define three sets\footnote{The notation $p \Vert n$ means that $p$
  divides $n$ exactly once.}:
  \begin{align*}
    A_p &= \limsup_{n > 0, p \nmid n} \mathbb{AO}(\Circ, n, \delta_n),\\
    B_p &= \limsup_{n > 0, p \Vert n} \mathbb{AO}(\Circ, n, \delta_n),\\
    C_p &= \limsup_{n > 0, p^2 \mid n} \mathbb{AO}(\Circ, n, \delta_n).
  \end{align*}
  Let $W = \mathbb{WA}(\Circ, \delta)$; bearing in mind \eqref{blimsup_or} it is clear that for any
  $p$:
  \begin{align}\label{W_eq_ABC}
    W = A_p \cup B_p \cup C_p.
  \end{align}
  We claim that these sets have the following properties:
  \begin{enumerate}[(a)]
    \item\label{Ap_claim} $A_p$ is almost invariant under the ergodic map: $y \mapsto py$,
    \item\label{Bp_claim} $B_p$ is almost invariant under the ergodic map: $y \mapsto py + [1/p]$,
    \item\label{Cp_claim} $C_p$ is invariant under the map $y \mapsto y + [1/p]$.
  \end{enumerate}

  To see why \eqref{Ap_claim} holds, consider:
  \begin{align*}
    p \cdot A_p &= p \cdot \limsup_{n > 0, p \nmid n} \mathbb{AO}(\Circ, n, \delta_n)\\
                &\subseteq \limsup_{n > 0, p \nmid n} p \cdot \mathbb{AO}(\Circ, n, \delta_n)\\
                &\subseteq \limsup_{n > 0, p \nmid n} \mathbb{AO}(\Circ, n, p \delta_n) &&\mbox{by lemma \ref{lem:AF_API} part \eqref{AF_part_i}}\\
                &=_{a.e.} A_p &&\mbox{by lemma \ref{lem:cassels}.}
  \end{align*}

  A very similar argument shows why \eqref{Bp_claim} holds except using parts \eqref{AF_part_ii},
  \eqref{AF_part_iii} of lemma \ref{lem:AF_API} instead of part \eqref{AF_part_i}.

  Claim \eqref{Cp_claim} is actually the most straightforward and holds by direct application of
  lemma \ref{lem:AF_API} part \eqref{AF_part_iv}.

  Now if $A_p$ is not almost empty for any prime $p$, then because it is almost invariant under an
  ergodic map, lemma \ref{lemma:ae_empty_or_univ_of_image_ae_le} tells us that it must be almost
  equal to $\Circ$. Since $A_p \subseteq W$, $W$ must also almost equal $\Circ$ and we have nothing
  left to prove.

  We may thus assume $A_p$ is almost empty for all primes $p$. By an identical argument, we may also
  assume $B_p$ is almost empty for all primes $p$. In view of \eqref{W_eq_ABC}, this means that:
  \begin{align*}
    W =_{a.e.} C_p \mbox{\quad for all $p$}.
  \end{align*}
  Thus, by \eqref{Cp_claim}, $W$ is almost
  invariant under the map $y \mapsto y + [1/p]$ for all primes $p$. The result then follows by
  applying lemma \ref{lem:circle_seq_ergodic}.
\end{proof}
Omitting code comments, the formal version of this $\sim 30$ line informal proof in \mathlib
requires 101 lines \mllink{number_theory/well_approximable.lean\#L191}.

\section{Final words}\label{sec:final_words}
\subsection{Removing the $\delta_n \to 0$ hypothesis}
As mentioned in the introduction, the hypothesis that $\delta_n \to 0$ in theorem
\ref{thm:gallagher_circle_decaying} may be removed.
A nice follow-up project would be to supply the proof in this case. By replacing $\delta_n$ with
$\max(\delta_n, 0)$, we may assume $0 \le \delta_n$ for all $n$. Given this, if
$\delta_n \not \to 0$, then in fact:
\begin{align*}
  \mathbb{WA}(\Circ, \delta) = \Circ.
\end{align*}
Note that this is a true equality of sets; it is not a measure-theoretic result. The main effort
would be to establish some classical bounds on the growth of the divisor-count and totient
functions.

In fact Bloom and Mehta have already formalised some of the required bounds as part of
their impressive Unit Fractions Project
\href{https://github.com/b-mehta/unit-fractions/blob/b60c39a3ebd40a84104a4064840b10fc2af15fb8/src/for_mathlib/basic_estimates.lean\#L824}{\link}
formalising Bloom's breakthrough \cite{Bloom2021, BloomMehta2022}. Once the relevant results are
migrated to \mathlib, removing the $\delta_n \to 0$ hypothesis will become even easier.

\subsection{The Duffin-Schaeffer conjecture}
Given some sequence of real numbers $\delta_1, \delta_2, \ldots$, Gallagher's theorem tells us that
$\mathbb{WA}(\Circ, \delta)$ is almost equal to either $\emptyset$ or to $\Circ$. The obvious
question is how to tell which of these two possibilities actually occurs for the sequence in hand.
The Duffin-Schaeffer conjecture, now a theorem thanks to Koukoulopoulos and Maynard, provides a
very satisfying answer:
\[
  \mathbb{WA}(\Circ, \delta) =_{a.e.}
  \begin{cases}
    \emptyset & \mbox{if } \sum \varphi(n)\delta_n < \infty,\\
    \\
    \Circ     & \mbox{if } \sum \varphi(n)\delta_n = \infty.
  \end{cases}
\]
where $\varphi$ is Euler's totient function.

That $\mathbb{WA}(\Circ, \delta) =_{a.e.} \emptyset$ if $\sum \varphi(n)\delta_n < \infty$ is very
easy (it follows from the {\lq}easy{\rq} direction of the Borel-Cantelli theorem). The converse
is extremely hard. It was first stated in 1941 \cite{MR4859} and was one of the most
important open problems in metric number theory for almost 80 years.

A formal proof of the converse would be especially satisfying given how elementary the statement of
the result is. After Gallagher's theorem, perhaps the next best target is lemma 5.2 in
\cite{KoukoulopoulosMaynard2020}.

\subsection{Developing against \texttt{master}}
It would have been impossible to complete the work discussed here without the extensive theories
of algebra, measure theory, topology etc. contained within \mathlib. As we have said, all of our
code was added directly to the \texttt{master} branch of \mathlib; most of it is
{\lq}library code{\rq}, not specific to Gallagher's theorem.

Although it is harder to develop this way, we believe it is essential in order to permit
formalisation of contemporary mathematics. We therefore wish to exhibit this project as further
evidence that this workflow can succeed, and we hope to encourage even more people to follow suit.

\bibliography{paper}

\end{document}